\newcommand{\approach}[1]{\textsf{Auction algorithm}}
\newcommand{\lsa}[1]{\textsf{LSA}}
\newcommand{\baseline}[1]{\textsf{Baseline}}
\newcommand{\comment}[1]{}
\newtheorem{proposition}{Proposition}
\newcommand{\ignore}[1]{}
\newtheorem{theorem}{Theorem}
\newtheorem{lemma}{Lemma}
\newtheorem{corollary}{Corollary}
\begin{document}
\title{Revisiting the Auction Algorithm for Weighted Bipartite Perfect Matchings}
%
%\titlerunning{Abbreviated paper title}
% If the paper title is too long for the running head, you can set
% an abbreviated paper title here
%
\author{Megha Khosla and Avishek Anand}

%

% First names are abbreviated in the running head.
% If there are more than two authors, 'et al.' is used.
%
\date{L3S Research Center, Hannover, Germany \\ {lastname}@l3s.de}
\maketitle   
\begin{abstract}
    We study the classical \emph{weighted perfect matchings problem} for bipartite graphs or sometimes referred to as the \emph{assignment problem}, i.e., given a weighted bipartite graph $G = (U\cup V,E)$ with weights $w : E \rightarrow \mathcal{R}$ we are interested to find the maximum matching in $G$ with the minimum/maximum weight. In this work we present a new and arguably \emph{simpler} analysis of one of the earliest techniques developed for solving the assignment problem, namely the \emph{auction algorithm}. Using our analysis technique we present tighter and improved bounds on the runtime complexity for finding an approximate minumum weight perfect matching in  $k$-left regular sparse bipartite graphs. 
\end{abstract}
\section{Introduction}
\label{sec:intro}
Let $G = (U\cup V,E)$ be a bipartite graph on $n$ vertices and $m$ edges and let $w:E\rightarrow \mathcal{R}$ be a weight function on edges. A perfect matching of $G$ is a subset $M\subseteq E$ of the edges such that for every node $v\in U\cup V$ there is exactly one incident edge $e\in M$. The weight of a matching $M$ is given by the sum of the weights of its edges, i.e. $w(M) := \sum_{e \in M} w(e)$. The minimum weight bipartite perfect matching problem (MWPM) is to find for a given bipartite graph $G$ and a given weight function $w$ a perfect matching of minimum weight which is sometimes also referred to as the \emph{assignment problem}. WLOG we can assume that $|U|\le |V|=n$.

In this work we present a novel and simple analysis of \emph{auction} algorithm originally proposed by Bertsekas~\cite{Ber81} for solving the assignment problem. The auction algorithm resembles a competitive bidding process whereby unassigned persons (nodes on the left set $U$) bid simultaneously for objects (nodes on the right set $V$), thereby raising their prices. On obtaining all bids, the object $v \in V$ is assigned to the highest bidder. 

There is a cost $w(u,v)$ for matching person $u$ with object $v$ and we want to assign persons to objects so as to minimize the overall cost (for a minimization objective). 
Let $E$ be the set of all pairs $(u,v)$ that can be matched.
A typical iteration of the auction algorithm consists of a \textbf{bidding} and an \textbf{assignment} phase. To start off, each object $v$ is initialized with some initial price $L(v)$. The bidding and assignment phases are as follows:

\textbf{Bidding Phase :} Let $I$ be the set of unassigned persons. Each person $u \in I$ finds an object $\bar{v}_u$ which optimizes for minimum cost, $w(u,v) + L(v)$ , that is,
$$ \bar{v}_u \,=\, \arg \min_{v:(u,v) \in E} \,w(u,v) + L(v)$$ 
and computes a \emph{bidding increment} $\gamma_u$ for some parameter $\varepsilon>0$
$$\gamma_u = w^2_u - w^1_u +\varepsilon,$$
where 
$$w^1_u = \min_{v: (u,v)\,\in\, E} w(u,v) + L(v)$$ 
is the best object value and 
$$w^2_u = \min_{v: (u,v)\,\in \,E, \,v \neq \bar{v}_u} w(u,v) + L(v)$$ 
is the second best object value.

\textbf{Assignment Phase :} Note that an object $v$ can be the best bid for multiple persons $u \in I$. In such a case $v$ is assigned to the person with the \emph{highest bid} and its price is raised by the highest bid, i.e., $$L(v) = L(v)+\max_{u \in I : v = \bar{v}_u} \gamma_u.$$ 
The person that was assigned to $v$ at the beginning of the iteration (if any) becomes unassigned. The algorithm continues with a sequence of iterations until all persons have an assigned object.

 The auction algorithm finds an approximate solution for maximum/minimum weight perfect matching which is bounded by $OPT + n\varepsilon$, which we also refer as the \emph{$\varepsilon$- optimal} solution. 
Let $w^{max}$ and $w^{min}$ denote the maximum and minimum edge weights respectively.
The worst case running time for complete bipartite graphs is $O({n^2 w^{max} \over \varepsilon})$.  In order to improve the running time a procedure called \emph{$\varepsilon$-scaling} is employed. 
It essentially consists of executing the algorithm several times starting with a large value of $\varepsilon$ and successively reducing $\varepsilon$ after each run. The procedure terminates typically when the ultimate value of $\varepsilon$ is less than some critical value (for example, $1/n$, for integral weights).
 For integral weights, it has be shown that the worst-case running time of the auction algorithm for finding optimal solutions using $\varepsilon$-scaling is $O(nm\log(nw^{max}))$~\cite{BeE88,BeT89}. However for the asymmetric problem where number of persons is less than the number of objects, the prices would need to be initialized by $0$ and $\varepsilon$-scaling cannot be used out of the box. 

In order to benefit from  $\varepsilon$-scaling an \emph{reverse} auction method is used (in addition to the regular auction) in which the objects also compete for persons by offering discounts~\cite{BCT91}. 

Subsequently the auction method was extended to solve the classical linear network flow problem and many of its special classes. In particular, \cite{Ber86a} and \cite{Ber86b} propose an extension to the minimum cost network flow problem using $\varepsilon$-relaxation. Auction algorithms for transportation problems~\cite{BeC89a} and shortest paths~\cite{Ber91} have also been proposed. We refer the interested reader to~\cite{Bertsekas1992} for a more comprehensive discussion on auction algorithms.

In this paper we focus on the sequential version of auction algorithm where a single unassigned person bids at every iteration. This version is also known as the \emph{Gauss-Seidel} version because of its similarity with Gauss-Seidel methods for solving systems of nonlinear equations. Moreover, we restrict ourselves to its application to finding minimum weight perfect matchings in bipartite graphs. 
Our analysis technique is inspired by the \lsa{} method~\cite{khosla2013balls} which is used to construct large hash tables and finding maximum matchings in unweighted bipartite graphs. In fact, the original motivation was to extend the label based technique in \lsa{} for solving the weighted version of the matching problem in bipartite graphs. Though our proposed algorithm later turned out to be a version of the auction algorithm, our analysis technique allows for a simpler interpretation of prices (labels) as a function of \emph{shortest unweighted paths} in the underlying bipartite graphs. This in turn helps us to bound the runtime of the weighted version in terms of the shortest unweighted paths in the underlying graphs (which also provides a bound on the runtime of LSA) with weight range ($w^{max}-w^{min}$) in the multiplicative factor.

In particular, we use the main result in \cite{khosla2013balls} to show that the worst case runtime of auction algorithm for sparse $k$- regular bipartite graphs is no more than $O(n\cdot {w^{max}-w^{min}\over \varepsilon})$ with high probability. For complete bipartite graphs, we prove a slightly better runtime bound of $O(n^2\cdot {w^{max}-w^{min}\over \varepsilon})$, which is an improvement over the previous bound when both $w^{max}$ and $w^{min}$ are large.

\subsection{More on Related Work}
\label{sec:rel-work}

The first polynomial time algorithm for the assignment problem, the so called Hungarian method, was given by ~\cite{Kuhn55,Kuhn56}; implementations of this algorithm had a running time of $O(n^3)$, which is still optimal for dense graphs. 
Edmonds and Karp~\cite{EdmondsK72} and Tomizawa~\cite{Tomizawa71} independently observed that the assignment problem is reducible to computing single-source shortest paths on non-negative weights. Further, Fredman and Tarjan~\cite{FT87} showed that using Fibonacci heaps, $n$ executions of Dijkstra's~\cite{Dij59} shortest path algorithm take $O(mn+n^2\log n)$ time. On integer weighted graphs this algorithm can be implemented slightly faster, in $O(mn + n^2\log\log n)$ time~\cite{Han02,Tho03} or $O(mn)$ time (randomized)~\cite{AnderssonHNR98,Thorup07b}, independent of the maximum edge weight. Gabow and Tarjan~\cite{GT89} gave a scaling algorithm for the assignment problem running in $O(m\sqrt{n}\log(nw^{max}))$ time for integral weights that is based on the Hungarian method. Orlin and Ahuja~\cite{OrlinA92} using the {\em auction} approach, and Goldberg and Kennedy~\cite{GoldbergK97}, obtain the same time bound. Recently the new scaling algorithm by Duan et al.~\cite{Duan:2017} matches the same bound for weighted matchings in general graphs. Sankowski takes an algebraic approach by proposing a randomized algorithm~\cite{Sankowski09}, that solves the assignment problem using fast matrix multiplications in $O(w^{max}n^\omega)$ time with high probability, where $\omega$ is the exponent of square matrix multiplication. Improved runtime bounds have also been achieved by reducing maximum weighted matching to maximum cardinality matching problems ( see~\cite{kao2001,huang2012,pettie2012} and references therein)

The other area of related work concerns with the algorithms for maximum weighted matchings (MWM). It is important to note that though the two problems MWM and MWPM are reducible to each other, the reductions do not work for their approximate versions as the approximation may compromise perfection. Duan and Pettie~\cite{duan2014linear} presented a $(1-\varepsilon)$-MWM algorithm that runs in $O(m\varepsilon^{-1}\log(\varepsilon^{-1}))$ on general graphs and $O(m\varepsilon^{-1} \cdot \min\{\log \varepsilon^{-1},\log (w^{max})\})$ time on integer-weighted general graphs. These results cannot be compared with the bound that we provide for random sparse bipartite graphs as we require that the matching returned should also be perfect. Please refer to~\cite{duan2014linear} for more details on other works for finding approximate maximum weight matching.

\subsection{Our Contribution}
We present a novel analysis of auction algorithm for finding approximate minimum weight perfect matchings in bipartite graphs for arbitrary weights. Our approach is inspired by the \emph{local search allocation} (\lsa{}) method~\cite{khosla2013balls} used to construct large hash tables and finding maximum matchings in unweighted bipartite graphs. Using our analysis we could easily provide a better runtime bound for a random sparse $k-$ left regular bipartite graphs. In a random $k$-left regular bipartite graph, each vertex in the left-set chooses $k$ neighbors from the right set independently and uniform at random.

From~\cite{inp:l12,fp12,fm12}  we know that for such graphs, for all $k\ge 3$, there exists a threshold density $c^*_k$ ( and is computable , for example for $k=3$ it is close $0.91$) such that when $|U|/|V| < c^*_k$, there exists a left perfect matching in $G$ with high probability, otherwise this is not the case. We will show that for random $k$- left regular bipartite graphs obeying this threshold condition, one can find an $\varepsilon$- optimal minimum weight perfect matching in near linear time. In case a perfect matching does not exist, we will show that with appropriate stopping criteria, the algorithm will stop and return an approximate solution for minimum weight maximum matching (though we do not analyse the approximation guarantee in detail). We note that there exists no restriction on the weights and they can be arbitrary. For complete bipartite graphs the runtime bound is $O(n^2 \cdot {w^{max}-w^{min} \over \varepsilon})$.
We now state the main result. 

\begin{theorem} \label{thm:main}
Let $G=(U\cup V, E)$ be a weighted bipartite graph with $|U|\le |V| = n$ and arbitrary weights. Let $OPT_G$ be the weight of the optimal minimum weight maximum matching in $G$. Then for an arbitrary parameter $0<\varepsilon $ the auction algorithm returns a $OPT_G + n\varepsilon$ minimum weight maximum matching. The worst case runtime for sparse and complete bipartite graphs is as follows.
\begin{enumerate}

\item \textbf{Sparse Graphs:} For $k\ge 3$, let $G=(U\cup V,E)$ be such that each vertex in $U$ chooses $k$ neighbors in $V$ independently and uniform at random. In addition let $|U| < c^*_k n$. The worst case runtime in case a perfect matching exists is $O(n \cdot 
{w^{max}-w^{min} \over \varepsilon})$ with probability $1-o(1)$. 

\item \textbf{Complete Graphs:} For complete bipartite graphs the runtime bound is $O(n^2 \cdot {w^{max}-w^{min} \over \varepsilon})$.

\end{enumerate}
\end{theorem}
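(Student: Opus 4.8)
The plan is to decouple the two claims of the theorem: the quality guarantee $OPT_G + n\varepsilon$, which is purely structural and holds for any run of the algorithm, and the runtime bounds, which depend on the graph. For the approximation guarantee I would first record the invariant that the algorithm maintains \emph{$\varepsilon$-complementary slackness}: whenever a person $u$ is assigned to an object $v$, the pair satisfies $w(u,v) + L(v) \le \min_{v':(u,v')\in E}\bigl(w(u,v')+L(v')\bigr) + \varepsilon$. This follows because a person only (re)bids for its current best object and the increment $\gamma_u = w^2_u - w^1_u + \varepsilon$ is calibrated so that after the price update the assigned object stays within $\varepsilon$ of the minimizer. Given this invariant for the returned matching $M$ and comparing against an optimal matching $M^{*}$, summing the slackness inequality over the matched persons makes the price terms $\sum_v L(v)$ cancel, leaving $w(M) \le w(M^{*}) + n\varepsilon$ and hence the $OPT_G + n\varepsilon$ bound.

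For the runtime I would use the potential $\Phi = \sum_{v\in V} L(v)$. Since every iteration consists of exactly one accepted bid whose increment is $\gamma_u \ge \varepsilon$, the potential increases by at least $\varepsilon$ per iteration; hence the number of iterations is at most $(\Phi_{\mathrm{final}} - \Phi_{\mathrm{init}})/\varepsilon$. With prices initialized to $0$, bounding the runtime reduces to bounding the final total price $\sum_v L(v)$.

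The crux — and the place where the analysis departs from the classical one — is to control the final price of each object by a purely \emph{combinatorial, unweighted} quantity. Following the label interpretation underlying \lsa{}~\cite{khosla2013balls}, I would assign to each object $v$ a label $\ell(v)$ equal to the length of the shortest unweighted alternating path connecting $v$ to an unassigned vertex, and prove the key lemma
\[
 L_{\mathrm{final}}(v) \;\le\; \ell(v)\cdot\bigl(w^{max}-w^{min}+\varepsilon\bigr).
\]
The intuition is that prices propagate outward in BFS-like layers: an object at label $\ell$ can only be bid up after the objects one layer closer to the frontier have risen, and each such propagation step raises a price by at most one weight gap $w^{max}-w^{min}$ plus the additive $\varepsilon$. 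Establishing this lemma cleanly is the main obstacle, precisely because the bidding increments $\gamma_u$ are non-uniform; the argument must show that the accumulated excess of a price over its $\varepsilon$-floor telescopes along a shortest path and is therefore capped by $\ell(v)$ times the weight range, independently of how many intermediate bids occurred.

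Granting the lemma, summing over objects gives $\Phi_{\mathrm{final}} \le (w^{max}-w^{min}+\varepsilon)\sum_{v} \ell(v)$, so the number of iterations is $O\bigl(\tfrac{w^{max}-w^{min}}{\varepsilon}\sum_v \ell(v)\bigr)$. It then remains to bound $\sum_v \ell(v)$ for the two graph classes. For complete bipartite graphs every label is trivially at most $n$, giving $\sum_v \ell(v) = O(n^2)$ and the stated $O\bigl(n^2\cdot\tfrac{w^{max}-w^{min}}{\varepsilon}\bigr)$ bound. For the random $k$-left-regular case with $k\ge 3$ and $|U| < c^*_k n$, I would invoke the main result of~\cite{khosla2013balls}, which shows that below the matching threshold the sum of \lsa{} labels is $O(n)$ with probability $1-o(1)$; substituting this yields $O\bigl(n\cdot\tfrac{w^{max}-w^{min}}{\varepsilon}\bigr)$ with high probability and completes the proof.
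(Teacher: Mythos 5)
Your overall architecture matches the paper's: a potential argument (labels rise by at least $\varepsilon$ per move, so the move count is $\sum_v L_{\mathrm{final}}(v)/\varepsilon$), a key lemma bounding each final price by the unweighted alternating distance to a free object times $(w^{max}-w^{min}+\varepsilon)$, and the \lsa{} result of~\cite{khosla2013balls} to bound the sum of those distances by $O(n)$ in the sparse random case. Your route to the $OPT_G+n\varepsilon$ guarantee via $\varepsilon$-complementary slackness and cancellation of the price sums is a legitimate alternative to the paper's argument, which instead decomposes the symmetric difference with an optimal matching into alternating paths and cycles and applies its path lemma to each; your version just needs the observation that objects left unmatched by the algorithm still carry price $0$, so the cancellation survives the asymmetric case $|U|<|V|$.

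However, there are two genuine gaps. First, the key lemma $L_{\mathrm{final}}(v)\le \ell(v)\,(w^{max}-w^{min}+\varepsilon)$ is exactly the content of the paper's Proposition~\ref{prop:labweight}, Lemma~\ref{lem:step1} and Corollary~\ref{cor:dist}, and you explicitly defer it as ``the main obstacle'' without proving it. The paper's proof is short but it is the substance of the result: one first shows, using monotonicity of labels, that every matched edge $(u,v)$ satisfies $L_p(v)\le \min_{v'\in N(u)\setminus\{v\}}\bigl(L_p(v')+w(u,v')\bigr)-w(u,v)+\varepsilon$ at \emph{all} later times $p$ (not just at the moment of assignment), and then telescopes this inequality by induction along a shortest alternating path to a free vertex, whose label is $0$. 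Without this, the reduction to $\sum_v \ell(v)$ is unsupported. Second, your accounting for complete bipartite graphs is wrong: bounding $\ell(v)\le n$ gives $O\bigl(n^2\cdot\tfrac{w^{max}-w^{min}}{\varepsilon}\bigr)$ \emph{iterations}, and since each iteration of the auction step on a complete graph costs $\Theta(n)$ comparisons to find the best and second-best objects, your total would be $O\bigl(n^3\cdot\tfrac{w^{max}-w^{min}}{\varepsilon}\bigr)$, not the claimed bound. The correct argument is that in a complete bipartite graph every matched object is at alternating distance at most $2$ from the remaining free object, so $\sum_v \ell(v)=O(n)$, the iteration count is $O\bigl(n\cdot\tfrac{w^{max}-w^{min}}{\varepsilon}\bigr)$, and multiplying by the $O(n)$ per-iteration cost yields $O\bigl(n^2\cdot\tfrac{w^{max}-w^{min}}{\varepsilon}\bigr)$. (For the sparse case your bound survives only because the analogous per-move cost $k$ is a constant.)
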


The main idea behind the proof for the runtime bounds is to show that the prices (which we will refer to as labels in the subsequent section) when initialized with zero are increasing by at least $\varepsilon$ and are bounded in terms of shortest distances in the underlying graph and the weight range. We will then use the result from~\cite{khosla2013balls} which provides a linear bound in expectation and with high probability for the sum of the shortest distances in the special class of sparse bipartite graphs as considered in Theorem~\ref{thm:main}. For the approximation guarantee, we will show that no alternating path allows for a decrease in the existing matching weight by more than $n\varepsilon$. Note that we do not provide any improved bound on approximation quality of the solution in this paper. We start by giving a detailed description of the algorithm in the next section.

\section{The Auction Algorithm for Perfect Weighted Matchings}
\label{sec:approach}

\subsection{Notations.} Throughout the paper we use the following notations unless stated otherwise. We denote the set of integers $\{1,2,\ldots, n\}$ by $[n]$. Let $G =(U\cup V;E)$ denote a weighted bipartite graph where $|U|\le |V|=n$.
For any $e\in E$, $w(e)$ denotes the weight on edge $e$. In addition let $w^{max}$ and $w^{min}$ denote the maximum and minimum weight on any edge respectively. For any vertex $u\in U$ we refer to set of neighbors of $u$ in $V$ as $N(u)$. For any $v\in V$, $L(v)$ denotes the label of $v$. Let $M$ be the set of matched edges in the optimal matching. For $u\in U$ and $v\in V$, we say that an edge $e= (u,v)$ is assigned to $v$ if $e\in M$. We call a vertex $v\in V$ \emph{free} if and only if no edge is assigned to it.
The optimal minimum weight for a perfect matching in $G$ is then equal to $\sum_{e\in M} w(e)$ and is denoted by $OPT_G$.

\subsection{The Algorithm and its Analysis}\label{sec:algo}

We first explain the auction algorithm in detail with respect to our interpretation. 
Initially we are provided with all vertices from the right set $V$. 
We assign labels to all vertices in $V$ and initialize them by $0$. The vertices of the left set $U$ appear one by one together with the incident edges and their weights. An edge $e=(u,v)$ is assigned to vertex $v\in V$ such that the sum $L(v) + w(e)$ is minimum for all edges incident on $u$. We refer to the rule for choosing a candidate edge as the \emph{choice rule}. The original algorithm aimed only to find prefect matchings. In case a perfect matching does not exist, the algorithm might enter in an endless loop. We will argue in the next section that the label values are bounded and therefore resolve this issue by checking the minimum label with the maximum possible value of $n/2 (w^{max}-w^{min} +\varepsilon)$. In case for some vertex $u$ all its neighbors have labels greater than the maximum value, $u$ is discarded and never matched.

Let for vertex $v'\in V\backslash \{v\}$ and $e'=(u,v')$, the sum $L(v') + w(e')$ is the minimum. For some $0<\varepsilon$, the label of $v$ after its new assignment will be updated as follows.
$$L(v) = L(v') + w(e') -w(e) + \varepsilon .$$
We call the above rule as the \emph{update rule}.

In case $v$ is not empty and was already assigned another edge $(u,v)$, the edge $(u,v)$ is moved out of the matching and the process is repeated for vertex $u$. 

Let $\mathbf{L}= \{ L(v_1), \ldots, L(v_n)\}$  and $\mathbf{T}= \{ T(v_1), \ldots, T(v_n)\}$ where $L(v_i)$ denotes the label of vertex $v_i$ and $T(v_i)$ denotes the vertex matched to $v_i$.  We initialize $\mathbf{L}$ with all $0$s , i.e., all vertices are free. \approach{} is described in Algorithm~\ref{algo} which calls the Procedure~\ref{algo:orientEdge} to match an arbitrary vertex from the left set  when it appears.

\begin{algorithm}[h!]
\caption{\approach{} ($U,V,E$)}
\label{algo}
\begin{algorithmic}[1]
\FORALL {$v\in V$}
\STATE { Set $L(v)=0$}
\STATE {Set $T(v)= \emptyset$}
\ENDFOR
\FORALL {$u\in U$}
\STATE CALL MatchVertex ($u, \mathbf{L},\mathbf{T}$)
\ENDFOR

\end{algorithmic}
\end{algorithm}
 \floatname{algorithm}{Procedure}
\begin{algorithm}[h!]
\caption{MatchVertex ($u, \mathbf{L},\mathbf{T}$)}
\label{algo:orientEdge}
\begin{algorithmic}[1]
\STATE Choose $v\in N(u)$ such that $L(v) + w(u,v)$ is the minimum ~~~~~~$\rhd${\textbf{Choice Rule}}
\IF {$L(v) > n/2 (w^{max} -w^{min} +\varepsilon)$}
\STATE RETURN
\ENDIF
\STATE $L(v) \leftarrow  \min{(L(v') + w(u,v')| v' \in N(u)\setminus \{v\})} - w(u,v)+ \varepsilon$~ ~~~~~~~~~~~$\rhd${\textbf{Update Rule}}
\IF{$(T(v)\neq \emptyset )$}
\STATE $y\leftarrow T(v)$~~~~~~~~~~~~~~~~~~ $\rhd${\textbf{Move that moves an edge out of matching}}
\STATE $T(v) \leftarrow u$  ~~~~~~~~~ $\rhd${\textbf{Move that assigns a new edge or matches a new vertex}}
\STATE $\mathbf{CALL}$ {MatchVertex($y, \mathbf{L},\mathbf{T}$)}
\ELSE  
\STATE $T(v) \leftarrow u$ ~~~~~~~~~~~~~~~~~~ $\rhd${\textbf{Move that assigns an edge or matches a vertex}}
\ENDIF
\end{algorithmic}
\end{algorithm}
We observe that if \approach{} does not enter an endless loop in the Procedure~\ref{algo:orientEdge}, it would return a matched vertex for each of the vertices in $U$. This implies that \approach{} will find a left perfect matching if it terminates.

In the next section we will prove that the algorithm does not run in endless loops, i.e., it terminates, by showing that the (1) labels are non decreasing and (2) labels are bounded by a maximum value. 
\subsection{Bounding the Labels} \label{sec:proof}
We need some additional notation. In what follows  a \emph{move} denotes either assigning an edge to a free vertex or replacing a previously assigned edge. Let $P$ be the total number of moves performed by the algorithm. For $p\in [P]$ we use $L_p(v)$ to denote the label of vertex $v$ at the end of the $p$th move. Let $\mathcal{M}_p$ denote the set of matched edges at the end of the $p$th move.

Let an edge $(u,v)$ is assigned to a vertex $v\in V$ in some move $p$. The choice and the update rules can then be rewritten as follows. 

\begin{align} \label{eq:chrule}
&\textbf{Choice Rule : } L_{p-1}(v) + w(u,v) \le \min_{v'\in N(u)\setminus \{v\}} (L_{p-1}(v') + w(u,v')) .
\end{align}
\begin{align} \label{eq:uprule}
&\textbf{Update Rule : } L_{p}(v)  = \min_{v'\in N(u)\setminus \{v\}} (L_{p-1}(v') + w(u,v')) -w(u,v) +\varepsilon.
\end{align}
We will first show that the label of exactly one vertex increases by at least $\varepsilon$ at the end of a move. The labels of other vertices remain unchanged. Even though this fact is obvious from the algorithm description, we prove it here for completeness.

\begin{proposition}\label{prop:lev}
  For any $ p \in [P]$ there exists exactly one vertex $v\in V$ such that  $L_p(v) \ge L_{p-1}(v) +\varepsilon$ and for all other $v'\in V\backslash \{v\}$, $L_p(v') = L_{p-1}(v')$.
\end{proposition}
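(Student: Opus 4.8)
The plan is to split the statement into its two halves and dispatch them separately, since they have quite different character. The claim that every vertex other than the chosen $v$ keeps its label is essentially syntactic: inspecting Procedure~\ref{algo:orientEdge}, a single move alters the label array $\mathbf{L}$ only through the Update Rule on line~5, which rewrites $L(v)$ for exactly the one vertex $v$ selected by the Choice Rule on line~1. No other entry of $\mathbf{L}$ is touched before the (possibly recursive) call that initiates the next move. Hence $L_p(v') = L_{p-1}(v')$ for every $v' \in V \setminus \{v\}$, and it remains only to lower-bound the increase at $v$ itself.

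For the increase I would combine the two rewritten rules directly. Write $A := \min_{v' \in N(u) \setminus \{v\}}(L_{p-1}(v') + w(u,v'))$ for the second-best object value seen by $u$ at the start of move $p$. The Choice Rule~\eqref{eq:chrule} states precisely that $v$ is (weakly) at least as good as every competitor, i.e.\ $L_{p-1}(v) + w(u,v) \le A$, equivalently $L_{p-1}(v) \le A - w(u,v)$. The Update Rule~\eqref{eq:uprule} sets $L_p(v) = A - w(u,v) + \varepsilon$. Chaining the two gives $L_p(v) = (A - w(u,v)) + \varepsilon \ge L_{p-1}(v) + \varepsilon$, which is exactly the desired bound. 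This is the entire content of the argument.

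There is essentially no hard step here: the whole proof is the substitution of the Choice-Rule inequality into the Update-Rule equation, and — as the authors note — the statement is immediate from the algorithm. The only point warranting a line of care is the well-definedness of $A$: if $u$ has a single neighbor then $N(u) \setminus \{v\}$ is empty and the minimum is vacuous. I would dispose of this either by adopting the convention that an empty minimum equals $+\infty$ (so the bound holds trivially, the label only jumping upward), or by noting that the moves indexed by $p \in [P]$ are exactly those on which line~5 executes, so the expression defining $L_p(v)$ is literally the quantity $A - w(u,v) + \varepsilon$. For completeness I would also remark that the RETURN branch (lines~2--3) performs no label update and therefore does not correspond to any move $p$, keeping the indexing consistent with the definition of a \emph{move} given in the text.
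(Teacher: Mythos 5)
Your proposal is correct and follows essentially the same route as the paper: the labels of all vertices other than $v$ are unchanged by the definition of a move, and substituting the Choice Rule inequality~\eqref{eq:chrule} into the Update Rule~\eqref{eq:uprule} yields $L_p(v) \ge L_{p-1}(v) + \varepsilon$. The extra remarks on the empty-minimum convention and the RETURN branch are sensible housekeeping but do not change the argument.
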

\begin{proof}
By definition of the move, the label of exactly one vertex is altered at the end of a move. Let $v$ be assigned as edge $(u,v)$ is some move $p\in [P]$. Therefore labels of all vertices except $v$ remain unchanged at the end of the $p$th move, i.e., 
$$\forall v'\in V\backslash \{v\} : L_p(v') = L_{p-1}(v')$$
For vertex $v$ the new label at the end of the $p$th move is defined by \eqref{eq:uprule} (the update rule) as
$$ L_p(v) = \min_{v'\in N(u)\backslash \{v\}}( L_{p-1}(v') + w(u,v') )-w(u,v) +\varepsilon,$$ 
which combined with \eqref{eq:chrule} gives $L_p(v) \ge L_{p-1}(v) +\varepsilon$, thereby concluding the proof.

\end{proof}
In the following proposition we bound the label of any vertex with respect to the labels of the other neighbors of its matched vertex and the corresponding edge weights. We will need this to bound the maximum label of any vertex at the end of the $(P-1)$th move.
\begin{proposition}\label{prop:labweight}
 For all $ p \in [P]$ and all $(u,v)\in \mathcal{M}_p$, the following holds.
 $$L_p(v) \le \min_{v'\in N(u)\setminus \{v\}}(L_{p}(v') + w(u,v')) - w(u,v)+\varepsilon.$$
\end{proposition}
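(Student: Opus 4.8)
The plan is to trace the label $L_p(v)$ back to the move in which the edge $(u,v)$ was most recently assigned, and then to upgrade the labels of the remaining neighbors from that earlier time up to time $p$ using monotonicity. First I would let $q \le p$ denote the last move among $1,\dots,p$ in which $v$ received an edge. Since $(u,v) \in \mathcal{M}_p$, the edge currently incident to $v$ is exactly $(u,v)$, so this most recent assignment to $v$ must in fact have assigned $(u,v)$ itself.

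I would then note that $v$ is not reassigned in any of the moves $q+1,\dots,p$, so by Proposition~\ref{prop:lev} its label is unchanged throughout that window and $L_p(v) = L_q(v)$. Applying the Update Rule~\eqref{eq:uprule} at move $q$ gives
$$L_p(v) = L_q(v) = \min_{v'\in N(u)\setminus\{v\}} (L_{q-1}(v') + w(u,v')) - w(u,v) + \varepsilon.$$

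The final step is to pass from the labels $L_{q-1}(v')$ to the later labels $L_p(v')$. For this I would invoke monotonicity of the labels, which follows at once from Proposition~\ref{prop:lev}: every move either fixes a label or raises it by at least $\varepsilon$, so labels are non-decreasing in the move index. Since $q-1 \le p$, we have $L_{q-1}(v') \le L_p(v')$ for each $v'$, whence $\min_{v'}(L_{q-1}(v') + w(u,v')) \le \min_{v'}(L_p(v') + w(u,v'))$, because raising every term inside a minimum can only raise the minimum. Substituting this into the displayed equality yields exactly the asserted bound.

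I expect the main point of care — more bookkeeping than genuine obstacle — to be the choice of $q$ as the \emph{last} assignment to $v$ up to time $p$, rather than the move in which $(u,v)$ first entered the matching. This precise choice is what lets both the ``label frozen while matched'' step and the monotonicity step go through simultaneously; once it is fixed, the statement is a direct consequence of the Update Rule together with Proposition~\ref{prop:lev}.
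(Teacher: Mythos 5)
Your proof is correct and follows essentially the same route as the paper: locate the last move $q$ at which $v$ was assigned, apply the update rule there, use that $L(v)$ is frozen afterwards, and lift the neighbors' labels to time $p$ by monotonicity from Proposition~\ref{prop:lev}. The only difference is cosmetic — the paper splits into the cases $q=p$ and $q<p$, whereas you handle both uniformly.
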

\begin{proof}
Let in some move $p\in P$ an edge $(u,v)$ is placed in the matched set $\mathcal{M}_p$. Note that the labels of all vertices $v' \in V\setminus \{v\}$ remain unchanged at the end of the $p$th move. By update rule we obtain

\begin{align*}
L_{p+1}(v) = &\min_{v'\in N(u)\backslash \{v\}}(L_p(v') + w(u,v')) - w(u,v) +\varepsilon \\
 \le & \min_{v'\in N(u)\backslash \{v\}} (L_{p+1}(v') + w(u,v')) - w(u,v) + \varepsilon.
\end{align*}
The last inequality holds $L_{p+1}(v') \ge L_{p}(v') $ for all $v'$ and $p$ by Proposition~\ref{prop:lev}.
For any other edge $(u',v')\in \mathcal{M}_p$ that was last assigned in some move $p'<p$ the following holds by update rule
 \begin{align*} L_{p'}(v') =  \min_{v''\in N(u')\backslash \{v'\}}(L_{p'-1}(v'') + w(u',v'')) - w(u',v') +\varepsilon \\
 \le \min_{v''\in N(u')\backslash \{v'\}}(L_{p}(v'') + w(u',v'')) - w(u',v') +\varepsilon
 \end{align*}
The last inequality holds as $L_{p}(v'')\ge L_{p'-1}(v'') $ for all $v''$ and all $p> p'-1$. 
Also as $v'$ was not updated after $p'$th move, $L_{p'} = L_{p}$. We therefore obtain 
$$L_{p}(v') \le  \min_{v''\in N(u')\backslash \{v'\}}(L_{p}(v'') + w(u',v'')) - w(u',v') +\varepsilon ,$$
thereby completing the proof.

\end{proof}
\subsubsection{ Maximum Label and the Runtime}
We want to prove a bound of the labels of the vertices in terms of their shortest distances to the set of free vertices at the end of the $(P-1)$th move. 
We start by considering  an ordered alternating path of unmatched and matched edges. Using Proposition~\ref{prop:labweight} we will exploit the relationship between (a)label of the last vertex (b) label of the first vertex and (c) edge weights in this ordered path.

For an even $4\le t\le 2n-4$, let $B_t=(v_1, u_2, v_2,\cdots, u_{{t\over 2} +1}, v_{{t\over2}+1}) $ denote an alternating path of unmatched and matched edges at the end of some move $p$.
We assume that the first edge in the path is an unmatched edge. Let $\mathcal{M}_p(B_t)$ and $\mathcal{M}'_p(B_t)$ denote the set of matched and unmatched edges in $B_t$ at the end of some $p$th move. 
\begin{lemma}\label{lem:step1}
For all $t\le 2n-4$ and all paths $B_t$ as defined above, the following holds.
$$ L_p(v_{t\over2}) \le L_p(v_1) + \sum_{e\in\mathcal{M}'_p(B_t)} w(e) - \sum_{e\in\mathcal{M}_p(B_t)}w(e) + {t\varepsilon \over 2}$$
\end{lemma}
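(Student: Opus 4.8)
The plan is to turn Proposition~\ref{prop:labweight} into one inequality per matched edge of $B_t$ and then telescope these along the path. Since $B_t$ alternates unmatched and matched edges and begins with an unmatched one, for every index $2\le i\le \tfrac{t}{2}+1$ the edge $(u_i,v_i)$ is matched and hence lies in $\mathcal{M}_p(B_t)$, while the immediately preceding edge $(u_i,v_{i-1})$ is unmatched and lies in $\mathcal{M}'_p(B_t)$. In particular $v_{i-1}$ is a neighbor of $u_i$ distinct from $v_i$, so $v_{i-1}\in N(u_i)\setminus\{v_i\}$, which is exactly the condition under which Proposition~\ref{prop:labweight} may be instantiated at this neighbor.

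First I would apply Proposition~\ref{prop:labweight} to the matched edge $(u_i,v_i)$. Its right-hand side is a minimum over all $v'\in N(u_i)\setminus\{v_i\}$, so replacing that minimum by the single admissible choice $v'=v_{i-1}$ only relaxes the bound and yields the step inequality
$$L_p(v_i)\le L_p(v_{i-1})+w(u_i,v_{i-1})-w(u_i,v_i)+\varepsilon,$$
valid for every $2\le i\le \tfrac{t}{2}+1$.

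Next I would sum this over all $\tfrac{t}{2}$ matched edges, i.e. over $i=2,\dots,\tfrac{t}{2}+1$. The label terms telescope, leaving $L_p(v_1)$ at one end and the label of the path's terminal $V$-vertex at the other. The weights $w(u_i,v_{i-1})$ run exactly over the unmatched edges of $B_t$ and assemble into $\sum_{e\in\mathcal{M}'_p(B_t)}w(e)$, the weights $w(u_i,v_i)$ run over the matched edges and contribute $-\sum_{e\in\mathcal{M}_p(B_t)}w(e)$, and the $\tfrac{t}{2}$ copies of $\varepsilon$ sum to $\tfrac{t}{2}\varepsilon$; this is precisely the claimed bound.

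I expect the whole argument to be routine bookkeeping, with the only genuine point being the first step: checking that $v_{i-1}\in N(u_i)\setminus\{v_i\}$ so that Proposition~\ref{prop:labweight} applies, and that passing from the minimum to the specific neighbor $v_{i-1}$ weakens the bound in the correct direction. This in turn relies on the vertices of $B_t$ being distinct, so that $v_{i-1}\neq v_i$, which holds because $B_t$ is a genuine alternating path. Beyond that, the main obstacle is purely notational---keeping the index range, the split of the path edges into $\mathcal{M}_p(B_t)$ and $\mathcal{M}'_p(B_t)$, and the tally of $\varepsilon$ terms consistent with the lemma's indexing.
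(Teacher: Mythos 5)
Your proof is correct and follows essentially the same route as the paper: the paper phrases the argument as an induction on the path length $t$, but each inductive step is exactly your per-matched-edge inequality obtained by specializing the minimum in Proposition~\ref{prop:labweight} to the preceding neighbor $v_{i-1}$ on the path, so unrolling the induction gives precisely your telescoping sum. The bookkeeping of which edges land in $\mathcal{M}_p(B_t)$ versus $\mathcal{M}'_p(B_t)$ and the count of $\varepsilon$ terms also match the paper's.
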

\begin{proof}
We will prove the lemma by induction on the path length $t$. For the case $t=2$ we have 
$B_2 = (v_1,u_2,v_2)$, where $(v_1,u_2)\in \mathcal{M}'_p(B_2)$ and $(u_2,v_2)\in \mathcal{M}_p(B_2)$. By Proposition~\ref{prop:labweight} we obtain
$$ L_p(v_2) \le L_p(v_1) + w(v_1,u_2) - w(u_2,v_2)  + \varepsilon.$$
Clearly the lemma follows for path length $2$. Now assume that the lemma is true for path length $t=2t'$, i.e., 
$$ L_p(v_{t'}) \le L_p(v_1) + \sum_{e'\in \mathcal{M}'_P(B_{2t'})} w(e') - \sum_{e\in \mathcal{M}_p(B_{2t'})} w(e) + {t'\varepsilon}$$

We will now prove the lemma for paths of length $2t'+2$. Consider the last vertex $v_{{t'}+1}$ in $B_{2t'+2}$. As it is matched to $u_{t'+1}$, by Proposition~\ref{prop:labweight} we obtain
$$L_p(v_{t'+1}) \le L_p(v_{t'}) + w(u_{t'+1}, v_{t'} ) - w(u_{t'+1}, v_{t'+1} )+\varepsilon.$$
Combining the above inequality with the induction hypothesis we obtain 
$$L_p(v_{t'+1}) \le L_p(v_1) + \sum_{e'\in \mathcal{M}'_P(B_{2t'+2})} w(e') -   \sum_{e\in \mathcal{M}_p(B_{2t'+2})} w(e) + (t'+1) \varepsilon,$$
hence completing the proof.
\end{proof}
We obtain the following corollary about the labels at the end of $(P-1)th$ move.
\begin{corollary} \label{cor:dist}
Let $d_{P-1}(v)$ denote the length of the shortest unweighted path to any free vertex in $G$ after move $P-1$ is completed. Then for all $v\in V$, $L_{P-1}(v) \le {d_{P-1}(v)\over 2} (w^{max} - w^{min} +\varepsilon)$.
\end{corollary}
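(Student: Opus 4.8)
The plan is to combine two ingredients: the fact that a free vertex necessarily carries label $0$, and Lemma~\ref{lem:step1} applied to a shortest alternating path ending at $v$. First I would observe that in the algorithm a right vertex, once assigned, is never reset to free: the entry $T(v)$ is only ever overwritten by a new left vertex, never emptied, and by Proposition~\ref{prop:lev} the label $L(v)$ changes only on a move that assigns an edge to $v$. Hence any vertex that is free at the end of move $P-1$ was never assigned and still satisfies $L_{P-1}(v)=0$. This settles the base case: if $d_{P-1}(v)=0$ then $v$ is free and the claimed bound $L_{P-1}(v)\le 0$ holds trivially.

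For a matched vertex $v$ I would take a shortest path to the free set and read it off in the orientation required by Lemma~\ref{lem:step1}: a free vertex $v_1$ at one end (reached along an unmatched edge) and $v$ at the other end (reached along its matched edge), with edges alternating unmatched/matched in between. Writing $t=d_{P-1}(v)$ for its length (even, since both endpoints lie in $V$), the path carries exactly $t/2$ unmatched and $t/2$ matched edges. Lemma~\ref{lem:step1} then yields
$$L_{P-1}(v)\le L_{P-1}(v_1)+\sum_{e\in\mathcal{M}'_{P-1}(B_t)}w(e)-\sum_{e\in\mathcal{M}_{P-1}(B_t)}w(e)+\frac{t}{2}\varepsilon,$$
and substituting $L_{P-1}(v_1)=0$, upper-bounding each of the $t/2$ unmatched edge weights by $w^{max}$ and lower-bounding each of the $t/2$ matched edge weights by $w^{min}$, gives $L_{P-1}(v)\le \frac{t}{2}(w^{max}-w^{min})+\frac{t}{2}\varepsilon=\frac{d_{P-1}(v)}{2}(w^{max}-w^{min}+\varepsilon)$, exactly the claim.

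The step needing the most care — and the one I expect to be the real obstacle — is the assertion that the shortest path to the free set can be taken alternating of precisely the form Lemma~\ref{lem:step1} requires (leaving $v$ along its matched edge and reaching $v_1$ along an unmatched edge). Two consecutive matched edges are impossible because $\mathcal{M}_{P-1}$ is a matching, so every vertex meets at most one matched edge; the delicate case is excluding two consecutive unmatched edges. Rather than manipulate a fixed path, I would recast the whole argument as an induction on $d_{P-1}(v)$: for a matched $v$ with partner $u=T(v)$, Proposition~\ref{prop:labweight} already forces the estimate through the matched edge $(u,v)$, so it suffices to exhibit a neighbour $v'\in N(u)\setminus\{v\}$ with $d_{P-1}(v')=d_{P-1}(v)-2$ and then apply the inductive bound to $L_{P-1}(v')$ together with $w(u,v')-w(u,v)\le w^{max}-w^{min}$. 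Establishing the existence of such a $v'$, i.e.\ that the matched edge of $v$ lies on a shortest alternating route to the free set, is the crux; here I would use the precise (alternating, hop-counted) meaning of $d_{P-1}(v)$ and the facts that all free vertices lie in $V$ and that every intermediate $V$-vertex on a shortest such path is matched. This is also the point at which the distance $d_{P-1}(v)$ must be tied to the quantity later bounded via~\lsa{}, so getting its definition to match the alternating structure is essential.
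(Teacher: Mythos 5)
Your proposal is correct and follows essentially the same route as the paper: apply Lemma~\ref{lem:step1} to a shortest alternating path from a free vertex (whose label is $0$) to $v$, and then bound each unmatched edge weight by $w^{max}$ and each matched edge weight by $w^{min}$. The only divergence is your extra care about whether the ``shortest unweighted path'' is alternating in the form Lemma~\ref{lem:step1} requires --- the paper silently reads $d_{P-1}(v)$ as the alternating-path distance (which is also the quantity Theorem~\ref{thm:LSA} controls), so your induction-on-distance fallback via Proposition~\ref{prop:labweight} is sound but amounts to re-deriving Lemma~\ref{lem:step1} rather than a genuinely different argument.
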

\begin{proof}
We assume that the given bipartite graph is connected, otherwise we run the algorithm on connected components. We note that at the end of $P-1$th move, there is at least one free vertex in $V$. Note that the label of any free vertex is $0$. For any $v\in V$we use Lemma~\ref{lem:step1} considering its shortest alternating path to some free vertex and bounding the weights of unmatched and matched edges on this path by $w^{max}$ and $w^{min}$ respectively, we obtain the desired result, i.e., 
$$ L_{P-1}(v) \le 0 + {d_{P-1}(v) \over 2} ( w^{max} -w^{min} +\varepsilon)$$
\end{proof}

Before we prove the runtime bounds we will describe the main result from~\cite{khosla2013balls} which we use to bound the distance values $d_{P-1}(v)$. Khosla~\cite{khosla2013balls} considers the problem of assigning $m$ items to $n$ locations, such that each of the $m$ items chooses $k$ locations independently and uniformly at random. Each item needs to be assigned to one of its $k$ choices. It is easy to see that such an assignment instance represents a $k$-left regular bipartite graph and a valid assignment corresponds to a left-perfect matching (where all vertices from the left set have been matched). 

A label based approach, LSA is used to find such a perfect matching for the case $m < c^*_k n$, where $c^*_k$ is the \emph{threshold density} (known from \cite{inp:l12,fp12,fm12} before). The fact that $m < c^*_k n$ ensures that a left perfect matching exists with probability $1-o(1)$. 
The runtime of the algorithm is bounded by the sum of labels at the end of the algorithm which are in turn bounded by the shortest distances to the set of free vertices. Note that $c^*_k <1$ and there always exist at least one free vertex at the end of the algorithm. The distances are in turn bounded using some structural properties of the corresponding graphs. She shows that the sum of shortest distances to the set of free vertices is bounded by $O(n)$ with probability $1-o(1)$ . In a subsequent extension Anand and Khosla~\cite{DBLP:KhoslaA16} show that the result also holds in expectation. We state here their main result adjusted to the terminology and its requirement in this paper.

\begin{theorem}\label{thm:LSA}
 For $k\ge 3$, let $G=(U\cup V,E)$ be such that each vertex in 
 $U$ chooses $k$ neighbors in $V$ independently and uniformly at random. In addition let $|U| < c^*_k |V|$. Then for some
 $\delta>0$, $\sum_{v\in V} d_{P-1} (v) = O(n)$ with probability $1-n^{-\delta}$ and in expectation.
\end{theorem}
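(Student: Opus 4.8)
The plan is to reduce the bound on $\sum_{v\in V} d_{P-1}(v)$ to a statement about how quickly the set of vertices that are \emph{far} from the free set shrinks. For $\ell\ge 0$ write $V_{\ge \ell} := \{v\in V : d_{P-1}(v)\ge \ell\}$, so that every free vertex lies in $V_{\ge 0}\setminus V_{\ge 1}$ (its distance is $0$). Since $d_{P-1}(v)=\sum_{\ell\ge 1}\mathbf{1}[d_{P-1}(v)\ge \ell]$, summing over $v$ gives the layer-cake identity $\sum_{v\in V} d_{P-1}(v) = \sum_{\ell\ge 1}|V_{\ge \ell}|$. Hence it suffices to prove that the sizes $|V_{\ge \ell}|$ decay geometrically with high probability: if $|V_{\ge \ell+2}|\le \beta\,|V_{\ge \ell}|$ for some constant $\beta<1$ and all $\ell$, then the geometric series sums to $O(|V_{\ge 0}|)=O(n)$.

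First I would record the purely structural meaning of being far from the free set. Because distances are measured along alternating paths in the current matching, a vertex $v$ with $d_{P-1}(v)\ge \ell$ can reach a free vertex only along a path that first escapes $V_{\ge \ell-1}$, so $V_{\ge \ell}$ together with its matched partners in $U$ forms a subgraph whose frontier into the already-reached region is small; informally, a large far-layer witnesses a set of vertices that fails to \emph{expand} outward. The crux of the argument is therefore an expansion property of the random $k$-left regular graph: for $k\ge 3$ and $|U|<c^*_k|V|$, with probability $1-n^{-\delta}$ every vertex set $S\subseteq V$ of size up to a constant fraction $\alpha n$ satisfies $|N(S)|\ge (1+c)|S|$ for suitable constants $c>0,\ \alpha>0$, which is exactly the regime guaranteed below the matching threshold $c^*_k$. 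Such expansion forces each far-layer to be strictly smaller than the one two steps closer, giving the recursion $|V_{\ge \ell+2}|\le \beta |V_{\ge \ell}|$ for the large layers, while the small layers (once below $\alpha n$) are shown to vanish within $O(\log n)$ further steps by the same expansion. Establishing this expansion statement with the right constants, and ruling out the bad dense configurations by a union bound over all candidate sets $S$, is the main obstacle, and is precisely where the threshold density $c^*_k$ and the prior structural results of \cite{inp:l12,fp12,fm12,khosla2013balls} enter.

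Given the geometric decay, the high-probability bound $\sum_{v\in V} d_{P-1}(v)=O(n)$ follows immediately from the layer-cake identity and summation of the geometric series. For the statement in expectation I would combine the high-probability event with a crude deterministic fallback: on the complementary event, of probability at most $n^{-\delta}$, every distance is at most $O(n)$ since an alternating path cannot repeat a matched vertex, so $\sum_v d_{P-1}(v)\le O(n^2)$ always. Taking $\delta$ large enough, or, following \cite{DBLP:KhoslaA16}, sharpening the tail estimate on $|V_{\ge \ell}|$ to an exponential bound that survives integration, makes the bad event contribute only $O(n)$ to the expectation, so that $\E\big[\sum_{v\in V} d_{P-1}(v)\big]=O(n)$ as well. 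The delicate point here is that a single polynomially small failure probability is not by itself enough to control the expectation; one really needs the exponential control on the tail of the far-layer sizes, which is the technical contribution that upgrades the high-probability statement of \cite{khosla2013balls} to the expectation statement of \cite{DBLP:KhoslaA16}.
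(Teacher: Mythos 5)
The first thing to note is that the paper does not actually prove this statement: Theorem~\ref{thm:LSA} is imported from \cite{khosla2013balls} (the high-probability bound) and \cite{DBLP:KhoslaA16} (the expectation), and the text explicitly refers the reader to those works for the proof. So there is no in-paper argument to compare yours against; the only fair comparison is with the cited works, whose general strategy your sketch does echo (a layer decomposition of the vertices far from the free set, with geometric decay of the layers driven by an expansion property of the random $k$-left-regular graph below the threshold $c^*_k$).

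As a standalone proof, however, your proposal has a genuine gap, and you name it yourself: the expansion lemma --- that with probability $1-n^{-\delta}$ every relevant set expands by a factor $1+c$ --- is asserted rather than established, and it is essentially the entire technical content of the theorem. Everything surrounding it (the layer-cake identity, summing a geometric series, the crude $O(n^2)$ deterministic fallback for the expectation) is routine. Two further points need repair even at the level of the outline. First, the set that must expand is not an arbitrary $S\subseteq V$ but the set of \emph{matched partners in $U$} of a far layer: if $v\in V_{\ge\ell}$ is matched to $u$, then every neighbor $v'$ of $u$ satisfies $d_{P-1}(v)\le d_{P-1}(v')+2$, so $N(S_\ell)\subseteq V_{\ge\ell-2}$ where $S_\ell\subseteq U$ is the set of partners of $V_{\ge\ell}$; the union bound must therefore range over left-sets, which is where the random $k$-choice structure and the threshold $c^*_k$ actually enter. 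Second, the expansion property can only hold for sets of size at most some $\alpha n$, so the layers of size exceeding $\alpha n$ need a separate accounting; your remark that the \emph{small} layers require the extra $O(\log n)$-step argument has this backwards --- small layers are exactly where the two-step recursion works cleanly, and it is the large initial layers that need additional care. None of this is unfixable, but filling it in amounts to reproving the main theorem of \cite{khosla2013balls}, which is presumably why the paper cites it instead of proving it.
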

We refer the reader to \cite{khosla2013balls,DBLP:KhoslaA16} for a complete proof of Theorem~\ref{thm:LSA}. We are now ready to prove the runtime bounds of the auction algorithm.

\begin{proof}[Proof of runtime bounds (Theorem~\ref{thm:main})]
We recall that $P \ge \sum_{v\in V} L_{P-1}(v) +1.$
From Corollary~\ref{cor:dist} and Theorem~\ref{thm:LSA} we conclude that for sparse random $k$-left regular bipartite graphs $P\le O(n\cdot (w^{max}-w^{min} +\varepsilon))$ with high probability. We know that labels increase by at least $\varepsilon$ in each step. Further  $k$ comparisons are required, in each move, to find the best and the second best vertices. From these two observations we can conclude that the worst case bound is $O(n k\cdot {w^{max}-w^{min} +\varepsilon \over \varepsilon})$ with high probability.

For complete bipartite graphs, note that $d_{P-1}(v) \le 2$ for all $v \in V$ (for the last free vertex, it is $0$ )
Note that in each move $n$ comparisons are made which gives us the worst case bound of $O\left(n^2\cdot {w^{max}-w^{min} \over \varepsilon}\right)$.
\end{proof}
It is easy to see that for arbitrary bipartite graphs and also for the case where the perfect matching does not exist, the worst case runtime is bounded by $O(n^2d \cdot {w^{max}-w^{min} +\varepsilon \over \varepsilon})$, where $d$ is the maximum degree of the vertices in $U$. 
We note that in practice the maximum label value will be much lower than what we estimate here as we assume all unmatched edges to be with the largest weight and all matched edges to be of smallest weight. Also, what appears like the worst case for the runtime analysis, i.e., where each unmatched edge has weight $w^{max}$ and each of the matched edge has weight $w^{min}$ and when the weight ranges are large, is in fact a considerably easy case for the algorithm. 
The parameter $\varepsilon$ will not have much role to play in this case as for each label update since the label will be increased by a high value $w^{max}-w^{min}$.

For completeness, we prove in the following section that the algorithm outputs a $\varepsilon$-optimal solution for the case where the perfect matching exists. 
We do not go into the detailed analysis for the case where a perfect matching does not exist because of the, previously stated, bad worst case runtime bound.
We believe that a closer analysis will lead to the same approximation guarantee for this case too. In the future, we hope to improve the runtime bound by using scaling type approach, in which the maximum possible value of the label is initially set to a very small value. This would lead to discarding most of the vertices and would result in a small matching size. In each scale we would increase the maximum value by some factor and improve on the matching size obtained from the previous scale. 

\subsection{An $\varepsilon$-optimal solution}
Using Lemma~\ref{lem:step1} it is now to easy to show that \approach{} outputs a near optimal minimum weight perfect matching for all classes of bipartite graphs and arbitrary weights provided a perfect matching exists. 
\begin{lemma}\label{lem:outW}
For any $\varepsilon >0$ and given bipartite graph $G$, \approach{} outputs a perfect matching with weight at most $OPT_G + n\varepsilon$
\end{lemma}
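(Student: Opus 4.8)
The plan is to bound the weight of the matching $M$ returned by \approach{} against the optimal matching $M^*$ by comparing the two matchings edge-by-edge through the structure of their symmetric difference. Since both $M$ and $M^*$ are perfect matchings (a perfect matching is assumed to exist, and \approach{} returns a left-perfect matching upon termination), their symmetric difference $M \symdiff M^*$ decomposes into a collection of vertex-disjoint alternating cycles and paths, where each such component alternates between edges of $M$ and edges of $M^*$. The goal is to show that replacing the $M$-edges by the $M^*$-edges along these components cannot decrease the total weight by more than $n\varepsilon$, i.e. $w(M) - w(M^*) \le n\varepsilon$, which is exactly the claimed bound $w(M) \le OPT_G + n\varepsilon$.

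The key tool is Lemma~\ref{lem:step1}, applied at the termination of the algorithm (the final configuration of labels). First I would observe that on a closed alternating component (a cycle, or a path whose endpoints coincide in the relevant label quantities), Lemma~\ref{lem:step1} can be invoked to relate the starting and ending labels along the alternating path. On a cycle, the first and last vertices coincide, so the label terms $L(v_1)$ and $L(v_{t/2})$ cancel, leaving
\begin{equation}
0 \le \sum_{e \in \mathcal{M}'(B)} w(e) - \sum_{e \in \mathcal{M}(B)} w(e) + \frac{t\varepsilon}{2},
\end{equation}
where $\mathcal{M}'(B)$ are the unmatched edges (which are exactly the $M^*$-edges of that component) and $\mathcal{M}(B)$ are the matched edges (the $M$-edges). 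Rearranging gives that the $M$-weight on that component exceeds the $M^*$-weight by at most $\frac{t\varepsilon}{2}$, i.e. by at most $\varepsilon$ times the number of $M$-edges in the component. Summing this inequality over all components of $M \symdiff M^*$, and noting that the total number of matched edges across all components is at most $n$ (since $M$ has at most $n$ edges), yields $w(M) - w(M^*) \le n\varepsilon$, which is the desired conclusion.

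The main obstacle I anticipate is handling the boundary cases of the decomposition cleanly, since Lemma~\ref{lem:step1} is stated for alternating paths that begin with an unmatched edge and whose length $t$ is even. A generic cycle in $M \symdiff M^*$ has even length and alternates, so after choosing an orientation and a starting vertex it fits the lemma's hypotheses, and the cancellation of endpoint labels is immediate. The subtler point is that, because a perfect matching exists and both $M$ and $M^*$ are perfect, the symmetric difference contains no odd components and no augmenting paths with genuinely free endpoints, so every component is a cycle and the endpoint labels always cancel. I would therefore need to argue carefully that all components are even alternating cycles (or verify directly that any path components contribute non-positively to the weight difference), so that the endpoint-label terms vanish or can be bounded away. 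Once that structural fact is in place, the weight comparison is a direct summation and the $\frac{t\varepsilon}{2}$ terms aggregate to at most $n\varepsilon$.
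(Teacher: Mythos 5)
Your proposal is correct in outline and follows the same overall architecture as the paper: decompose $M\symdiff M^*$ into alternating components, apply Lemma~\ref{lem:step1} (i.e.\ the telescoped form of Proposition~\ref{prop:labweight}) to each component at the final time $P$, and sum the per-component surpluses of at most $\varepsilon$ per matched edge to get $w(M)\le OPT_G+n\varepsilon$. Where you differ is in the cycle case: you close the telescoping chain all the way around the cycle so that the endpoint labels cancel, which is legitimate --- Proposition~\ref{prop:labweight} holds for every edge of $\mathcal{M}_P$, so the chain of inequalities is valid on a closed walk even though Lemma~\ref{lem:step1} is stated only for paths --- and is arguably cleaner than the paper's argument, which stops the chain one vertex short, singles out the right vertex $v_t$ matched last, and invokes the choice rule at the move $p$ in which it was matched (paying careful attention to the time indices $p-1$ versus $P$) to close the loop. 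Both yield the same $O(t\varepsilon)$ surplus per cycle, so nothing is lost.

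The one place you should be careful is the path case. Your claim that every component of $M\symdiff M^*$ is a cycle is true only when $|U|=|V|$; the paper's proof explicitly treats $|U|<|V|=n$, where the algorithm returns a left-perfect matching and the symmetric difference contains paths whose two endpoints lie in $V$, one free under $M$ and one free under $M^*$. Your fallback --- that such paths ``contribute non-positively'' --- is not quite right: they can contribute positively, up to $t\varepsilon$. The correct (and easy) fix, which is what the paper does, is to orient the path so that it starts at the vertex $v$ that is free under the output matching, note that $L_P(v)=0$ while the label of the far endpoint is nonnegative (indeed at least $\varepsilon$, since it has been updated), and read off from Lemma~\ref{lem:step1} that $\sum_{e\in\mathcal{M}_P(B_t)}w(e)<\sum_{e\in\mathcal{M}'_P(B_t)}w(e)+t\varepsilon$. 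With that inequality added to your cycle bound and summed over all vertex-disjoint components, the proof goes through exactly as you describe.
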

\begin{proof}
Let us assume that \approach{} does not output an optimal answer. First consider the case where $|U|<|V|=n$, such that there is a free vertex $v\in V$ at the end of the last move. As in Lemma~\ref{lem:step1} let $B_{t} = (v,u_1,v_1,u_2,v_2,\cdots, u_t,v_t)$ be an augmenting path of length $2t$ where $\mathcal{M}_P(B_{t})$ and $\mathcal{M}'_P(B_{t})$ denote the set of matched and unmatched edges in $B_{t}$. By Lemma~\ref{lem:step1} we obtain
\[ L_P(v_t) \le L_P(v) + \sum_{e\in \mathcal{M}'_P(B_{t})}w(e)- 
 \sum_{e\in \mathcal{M}_P(B_{t})}w(e) + t\varepsilon.\]
 As $L_P(v)=0$ and  $L_P(v_t) > 0$ we obtain 
 \begin{equation}\label{eq:augPath}
 \sum_{e\in \mathcal{M}_P(B_t)}w(e) < 
 \sum_{e\in \mathcal{M}'_P(B_t)}w(e) + t\varepsilon \end{equation}

We next consider an augmenting cycle $C_{2t}$ of length $2t$ such that 
$$C_{2t} = (u_1,v_1, u_2,v_2,\cdots , u_t,v_t,u_1)$$ where for all $i\in [t]$, $(u_i,v_i)\in \mathcal{M}_P$. 
WLOG we assume that vertex $v_t$ was matched later than all other right set vertices in $C_{2t}$ in some move, say $p$.
Let $B_{2t-4} =(v_1,u_2,v_2,\cdots, u_{t-1},v_{t-1})$. Now by Lemma~\ref{lem:step1} we obtain
$$ L_{P}(v_{t-1}) \le L_{P}(v_1) + \sum_{e\in \mathcal{M}_P(B_{2t-4})} w(e) - \sum_{e\in \mathcal{M}'_P(B_{2t-4})} w(e) +(t-2)\varepsilon$$
As $v_{1}$ and $v_{t-1}$ by assumption were not updated after the $(p-1)$th move we can write the above inequality as 
$$ L_{p-1}(v_{t-1}) \le L_{p-1}(v_1) + \sum_{e\in \mathcal{M}_P(B_{2t-4})} w(e) - \sum_{e\in \mathcal{M}'_P(B_{2t-4})} w(e) +(t-2)\varepsilon,$$
because $L_P(v_1) = L_{p-1}(v_1)$  and $L_{P}(v_{t-1})=L_{p-1}(v_{t-1}) .$

By the choice rule we have 
\begin{align}\label{eq:augCycle}& L_{p-1}(v_t) + w(u_t,v_t) \le   L_{p-1}(v_{t-1}) + w(u_t,v_{t-1})  \nonumber \\
&\stackrel{}\le L_{p-1}(v_1) + \sum_{i=1}^{t-1} w(u_{i+1},v_i) - \sum_{i=2}^{t-1} w(u_i,v_i) + (t-2)\varepsilon \nonumber \\
&\le L_{p-1}(v_t) + w(u_1,v_t) - w(u_1,v_1)   + \sum_{i=1}^{t-1} w(u_{i+1},v_i) - \sum_{i=2}^{t-1} w(u_i,v_i) + (t-1)\varepsilon \nonumber \\
\implies &\sum_{i=1}^t w(u_{i},v_{i})  \le
 \sum_{i=2}^t w(u_{i},v_{i-1}) + w(u_1,v_t) + (t-1)\varepsilon  \nonumber\\
\end{align}
The proof is completed by adding \eqref{eq:augPath} and \eqref{eq:augCycle} for all vertex disjoint augmenting paths and cycles.
\end{proof}

\bibliographystyle{plain}
\bibliography{references}
\end{document}